\documentclass[11pt]{article}

\usepackage[letterpaper,margin=1in]{geometry}
\usepackage[T1]{fontenc}
\usepackage{lmodern,microtype}
\usepackage{amsmath,amssymb,amsthm,mathtools}
\usepackage{aliascnt}
\usepackage{booktabs,tabularx,array}
\usepackage{enumitem}
\usepackage{algorithm,algpseudocode}
\usepackage{xcolor}
\usepackage{xspace}

\usepackage{needspace}
\usepackage[colorlinks=true,linkcolor=blue!45!black,citecolor=blue!45!black,
            urlcolor=blue!45!black]{hyperref}
\usepackage[capitalize,noabbrev,nameinlink]{cleveref}
\usepackage{fancyhdr}

\usepackage{tikz}
\usetikzlibrary{arrows.meta,decorations.pathreplacing,calc}

\usepackage[leftmargin=1em, rightmargin=1em]{quoting}

\newcommand{\MSP}{\ensuremath{\mathsf{MSP}}\xspace}

\setlist{topsep=5pt,itemsep=3pt,parsep=0pt}
\allowdisplaybreaks[1]
\newtheorem{theorem}{Theorem}[section]
\newaliascnt{lemma}{theorem}
\newtheorem{lemma}[lemma]{Lemma}
\aliascntresetthe{lemma}
\crefname{lemma}{Lemma}{Lemmas}
\newaliascnt{corollary}{theorem}

\aliascntresetthe{corollary}
\crefname{corollary}{Corollary}{Corollaries}
\newaliascnt{obs}{theorem}
\newtheorem{obs}[obs]{Observation}
\aliascntresetthe{obs}
\crefname{obs}{Observation}{Observation}
\theoremstyle{remark}
\newaliascnt{example}{theorem}

\aliascntresetthe{example}
\crefname{example}{Example}{Examples}

\usepackage{thm-restate}

\newcommand{\E}{\mathbb E}

\newcommand{\calI}{\mathcal I}

\newcommand{\calM}{\mathcal M}
\newcommand{\config}{W}
\newcommand{\final}{F}
\newcommand{\OPT}{\operatorname{Greedy}}
\newcommand{\ALG}{\operatorname{ALG}}

\title{The Matroid Secretary Conjecture is True}
\author{Sahil Singla\footnote{School of Computer Science, Georgia Institute of Technology, Atlanta, GA, USA. Email: ssingla@gatech.edu. Supported in part by NSF awards CCF-2327010 and CCF-2440113.}}
\date{\today}

\begin{document}
\maketitle
\begin{abstract}
We resolve the matroid secretary conjecture, giving an online algorithm that accepts each element of the offline optimum with probability at least $1/4$. 
The algorithm only needs the number of elements in advance
and independence-oracle access to subsets of already-arrived
elements; it does not need to know the matroid upfront.
\end{abstract}

\section{Introduction}

The classic secretary problem~\cite{Dynkin1963,Lindley1961}
asks us to select the largest of $n$ distinct positive values
revealed in uniformly random order, with immediate and
irrevocable decisions.
A simple algorithm discards the first $n/2$ values and then
selects the first value exceeding all previously observed values.
It succeeds whenever the highest value arrives in the second
half and the second-highest in the first half, an event with
probability at least $1/4$. Discarding the first $n/e$ values instead gives the asymptotically optimal success probability of $1/e$.

Motivated by applications in auction design, Hajiaghayi, Kleinberg, and Parkes \cite{HKP-EC04} and Kleinberg \cite{Kleinberg-SODA05}  extended the secretary problem to selecting up to $r \geq 1$ values with the objective of maximizing the sum of accepted values. Subsequently, Babaioff, Immorlica, and Kleinberg \cite{BIK-SODA07} (the journal version along with Kempe  \cite{BIKK-JACM18}) generalized the problem to selecting multiple values that form an independent set of a matroid  $\calM = (V, \calI)$ with $n=|V|$ elements. This is known as the matroid secretary problem (\MSP); selecting up to $r$ values corresponds to the special case of uniform matroid of rank $r$.
The \emph{matroid secretary conjecture} says that every matroid
admits an $O(1)$-competitive algorithm for \MSP, with a universal
constant independent of the matroid. An online algorithm is
$c$-competitive if its expected total value is at least $1/c$
of the offline optimum.

Matroids form natural feasibility constraints for multiple-value
selection since, in the offline setting, the simple greedy
algorithm, which considers elements in decreasing order of
values and accepts an element whenever feasible, is optimal. Furthermore, greedy is optimal for every assignment of nonnegative values to a downward-closed feasibility system if and only if that system is a matroid~\cite{Edmonds1971}.  
%


The \MSP problem has  been extensively studied in theoretical computer science, both because of its applications in mechanism design  \cite{BabaioffSurvey,Parkes-AGT07} and because it gives a natural model for beyond worst-case analysis of online algorithms \cite{GS-Book21}.
Despite significant progress for special classes of matroids (see \Cref{tab:msp-special-cases-compact}), obtaining an $O(1)$-competitive algorithm for general matroids has remained elusive.  
The $O(\log r)$ guarantee of  \cite{BIK-SODA07}, where $r$ is the rank of matroid, was  improved to $O(\sqrt{\log r})$ by Chakraborty and Lachish \cite{CL-SODA12}. This bound was further improved to $O(\log\log r)$  by Lachish  \cite{Lachish-FOCS14}. Subsequently, Feldman, Svensson, and Zenklusen \cite{FSZ-SODA15}  gave a simpler proof of this $O(\log\log r)$ bound. 
Although $O(1)$-competitive algorithms were known for related
models, such as free-order
secretary~\cite{JSZ-IPCO13,AKW-SODA14},
randomly assigned weights~\cite{Soto-SICOMP13,GV-Journal13},
matroid prophet inequalities~\cite{KW-STOC12}, and
matroid prophet secretary~\cite{EHKS-SICOMP24}, the original \MSP has stood the test of time: the $O(\log\log r)$ bound remained the best general guarantee for more than a decade.

We resolve the matroid secretary conjecture. In fact, our algorithm guarantees that each element of the  optimum is accepted with probability at least $1/4$. Thus, it is even $4$-probability-competitive in the sense of \cite{SotoTV-SODA18}. 

\begin{theorem}[Informal \Cref{thm:matroid-secretary}]
    There exists an online algorithm for the matroid secretary problem that accepts each element of the (canonical) optimum with probability at least $1/4$. The algorithm only needs the size  $n$ of matroid and independence oracle access to the  arrived elements. 
\end{theorem}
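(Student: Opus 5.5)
The plan mimics the classical $1/4$ argument for the single-item secretary problem. The idea is to split the arrival order into a random \emph{sample} and an \emph{online} part. Concretely, each element independently gets a uniform arrival time $t_e\in[0,1]$, which the algorithm can simulate knowing only $n$. Elements with $t_e<1/2$ are observed but never accepted. Each later element $e$ is accepted iff two conditions hold:
\begin{enumerate}[label=(\roman*)]
\item $e$ belongs to $\OPT$ of the elements arrived so far, i.e.\ $e$ is not spanned by the arrived elements of higher value;
\item $e$ is independent of the set $A$ already accepted, \emph{after contracting a carefully chosen configuration} $\config$ built from the sample.
\end{enumerate}
The set $\config$ plays the role of the second-highest value in the classical argument. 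It should be a set of sample elements, and ideally a deterministic function of the set of arrived elements with their values, so that conditioning arguments go through. Only independence queries on arrived elements are needed.

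The key steps, in order, are as follows.
\begin{enumerate}[label=(\arabic*)]
\item Fix an element $f\in\OPT(V)$. Condition on $t_f\ge 1/2$, which happens with probability $1/2$. By the definition of the canonical optimum, $f$ is never spanned by higher-valued arrived elements, so condition (i) always holds for $f$.
\item Show that, conditioned on $t_f\ge1/2$, the event that $f$ passes condition (ii) has probability at least $1/2$. The blocking event is that the accepted set $A$, together with $\config$, spans $f$.
\item Bound the blocking event with a symmetry or ``backward analysis'' argument in the spirit of Kesselheim et al.\ for online matching. Reveal the elements in reverse time order. By exchangeability, the element arriving at each time is uniform among those remaining, and the matroid structure should force the probability of blocking to be controlled by a telescoping sum.
\end{enumerate}

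The natural first attempt is the greedy-on-sample rule of Babaioff et al.: accept $e$ only if $e$ is not spanned by higher-valued sample elements and $A+e$ is independent. It fails only by an $O(\log r)$ factor, because accepted elements can collectively span $f$ even when no individual one does. To fix this, I would define acceptance through a ``rank-preserving'' invariant. Maintain the invariant that $A\cup\config_t$ is independent, where $\config_t$ is the greedy basis of higher-valued elements seen before time $t$. Then $f$ can be blocked only when $f\in\operatorname{span}(A\cup\config_t)$ while $f\notin\operatorname{span}(\config_t)$ restricted to heavier elements.

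The main task is then an exchange argument. Each accepted element should be charged to a distinct sample element of $\OPT$ of the heavier elements that it ``replaces''. With the exchange in place, blocking $f$ requires that the element which would be exchanged with $f$ landed online rather than in the sample. By symmetry, that has probability at most $1/2$, independent of the event $t_f\ge 1/2$. Combining this with step (1) gives $1/2\cdot 1/2=1/4$.

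The hard step is constructing $\config$ and the acceptance rule so that this exchange map is well defined and, critically, \emph{independent of the online arrival order} among elements heavier than $f$. That independence is what makes the $1/2$ symmetry argument valid. I expect to need a matroid-union/strong-exchange lemma showing that a fixed partner element $g(f)$, determined only by the set of arrived elements and their values, is the unique witness to $f$ being blocked. My first attempt would be to take $g(f)$ to be the minimum-weight element in the fundamental circuit of $f$ with respect to the current greedy basis. I would then prove that blocking implies $g(f)$ arrived after time $1/2$. Once this is in place, independence of arrival halves gives the bound.
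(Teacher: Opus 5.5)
There is a genuine gap, and it is exactly the step you call ``the hard step.'' You leave $W$ and the final acceptance rule open. The decisive claim, that blocking $f$ forces a single partner $g(f)$, fixed by order-independent data, to have arrived online, is asserted rather than derived; it is essentially the whole content of the conjecture.

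Your concrete candidates lack this property. The blocking event ``$A\cup W$ spans $f$'' depends on the accepted set $A$, which is an adaptive function of the order, and it can be caused jointly by several accepted elements. Take the hat from \Cref{sec:lookingBack}: sample edges $s_1=ux$, $s_2=xv$ of weights $4,2$, and online edges $e_1=vy$, $e_2=yu$, $h=uv$ of weights $1,3,H$, so $h$ belongs to the optimum. Let $W$ be the greedy basis of the arrived elements heavier than the current one.
\begin{itemize}
\item In the order $e_1,e_2,h$, your rule accepts $e_1$ (here $W=\{s_1,s_2\}$), then accepts $e_2$ (here $W=\{s_1\}$), and rejects $h$, which is spanned by $\{e_1,e_2\}$.
\item In the order $e_2,e_1,h$, it rejects $e_1$ by condition (i) and accepts $h$.
\end{itemize}
So whether $h$ is blocked depends on the relative order of two online elements. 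Your proposed witness also fails here. The greedy basis of the elements arrived before $h$ is $\{s_1,e_2,s_2\}$, the fundamental circuit of $h$ is $\{h,s_1,s_2\}$, and its minimum-weight element is $s_2$. That is a sample element, yet $h$ is blocked, so ``blocking implies $g(f)$ arrived after time $1/2$'' is false. Closely related contract-and-accept greedy rules are known not to be $O(1)$-competitive \cite{BBSW-WINE21}. The backward-analysis telescoping in your step (3) offers no mechanism around this.

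The missing idea is to make the configuration faced by $f$ independent of the order \emph{by construction}, rather than to seek a partner determined by exchangeable data.
\begin{itemize}
\item The paper keeps a configuration initialized to the sample and processes every element. Each sample element is replayed as a deletion at a uniformly random interleaved virtual position (\Cref{alg:matroidSec}).
\item A step is permitted only if it leaves the greedy membership of every previously processed element unchanged. A permitted update is performed even when the element is not accepted. In the hat above, $e_2$ is therefore rejected while the sample is unreplayed, since it would evict the finalized $e_1$.
\item The test is symmetric under swapping the element's two weights, so each step is an involution on configurations. Hence, for every fixed virtual order, the snapshot faced by $f$ is a bijective image of the uniform sample, so it is uniform and independent of the order (\Cref{lem:unique} and \Cref{lem:unifPerm}).
\item Brualdi's symmetric exchange (\Cref{lem:heart}) then yields at most one exchange partner, which is a function of the snapshot alone. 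A real $f$ is accepted exactly when no partner exists or $f$ is processed before it, which has conditional probability at least $1/2$.
\end{itemize}
Note three differences from your plan. The partner depends on the whole history, not just on the arrived set. It may be a sample or a real element. And the relevant event is whether it is processed before $f$ in the virtual order, not whether it landed online.
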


By linearity of expectation, this immediately implies a $4$-competitive algorithm for \MSP. 
Moreover, it is ordinal: it only needs comparisons between
observed values, rather than their  magnitudes.

\medskip
\noindent\textbf{Proof idea.} We first study the random-order two-sided Game-of-Googol setting:
each element has two fixed weights, one revealed upfront as a
uniformly random sample and the other revealed online.
A simple simulation reduces matroid secretary to this setting
by pairing each value with zero.
The matroid exchange property tells us that changing one weight
can affect at most one other element of the optimum.
This suggests that random arrival order should give an element
a fair chance of arriving before its only possible competitor.
The difficulty is that this competitor depends on the algorithm's
earlier decisions, so random arrival order alone does not justify
such a symmetry.
Our key idea is to maintain a \emph{reversibly updated sample} (Snapshot \Cref{lem:unique}).
For every fixed arrival order, the updates simply rearrange
the possible samples without biasing them.
The current sample therefore remains independent of the arrival
order, restoring the two-element symmetry despite the algorithm's
adaptivity. 
\Cref{alg:matroidSec} gives an equivalent formulation directly
in the secretary model, and \Cref{sec:lookingBack} explains how
it differs from natural greedy algorithms.

\medskip
\noindent\textbf{Submodular objectives.}
Secretary problems have also been studied for submodular objectives, beginning with
\cite{BHZ-TALG13,GRST-WINE10}, followed by improved guarantees
for several constraints
\cite{FNS-APPROX11,MTW-TOCS16,KT-APPROX17}.
Feldman and Zenklusen~\cite{FZ-SICOMP18} showed that an
$O(1)$-competitive algorithm for \MSP yields one for nonnegative
submodular objectives under the same matroid constraint. Thus, combining \Cref{thm:matroid-secretary} with the reduction of
\cite{FZ-SICOMP18} gives an
$O(1)$-competitive algorithm for the submodular matroid
secretary problem. This holds for every nonnegative
submodular objective, without requiring monotonicity.

\medskip
\noindent\textbf{Intersection of matroids.}
Secretary problems have also been studied under the intersection
of several matroid constraints. Feldman, Svensson, and
Zenklusen~\cite{FSZ-SICOMP22} developed a framework for combining
algorithms for individual matroids, including a preprocessing
step that makes their offline optima overlap. Combining this
preprocessing with an extension of our matroid secretary
algorithm gives an $O(k^3)$-competitive algorithm under the
intersection of $k$ arbitrary matroids; see
\Cref{thm:intersection}. As in the single-matroid case, the
algorithm needs only the number of elements and independence
queries on already-arrived elements in each matroid.

\medskip
\noindent\textbf{Other related work.}
Beyond algorithms for particular matroid classes, prior work
has established connections to contention
resolution~\cite{Dughmi-SICOMP25} and identified limitations
of natural greedy and partition-based
approaches~\cite{BBSW-WINE21,AKKG-ITCS23};
our algorithm uses greedy optimization, but it does not accept every feasible improving element. 
 There is also work on accepting multiple values while satisfying an arbitrary downward-closed feasibility constraint. Here, constant-competitive algorithms are
impossible~\cite{BIKK-JACM18}.
An $O(\log n)$-competitive algorithm is known, matching the lower bound up to a $\mathrm{poly}(\log\log n)$
factor~\cite{RS-STOC26,Rubinstein-STOC16}.
The $O(\log n)$ guarantee of~\cite{RS-STOC26} also applies to XOS objectives. For monotone subadditive objectives, polylogarithmic-competitive algorithms are known~\cite{RS-SODA17}.

\medskip
\noindent\textbf{Concurrent and subsequent work.}
A preliminary version of this paper also announced an
$8$-competitive algorithm for single-sample matroid prophet
inequalities. We omit this application here, as independent
concurrent work of \cite{ABHM-arXiv26} establishes the optimal
$2$-competitive guarantee for every fixed arrival order
independent of the samples and realized values. 
They also obtain a $64$-competitive matroid secretary
algorithm and an $e$-probability-competitive algorithm for linear
matroids in the known-matroid model; the latter result was
independently also obtained in \cite{BDLSV-arXiv26}.
For general matroids, Chan, Lin, and Wang~\cite{CLW-arXiv26}
improved our guarantees to approximately $3.1462$. Subsequently,
Huang~\cite{Huang-arXiv26} resolved the strong matroid secretary
conjecture, obtaining an $e$-probability-competitive algorithm
for every matroid in the  arrived-elements-only oracle model.

\begin{table}[h]
\centering
\small
\setlength{\tabcolsep}{4pt}
\renewcommand{\arraystretch}{1.13}
\begin{tabularx}{\linewidth}{@{}>{\raggedright\arraybackslash}p{0.28\linewidth}
>{\raggedright\arraybackslash}X@{}}
\toprule
\textbf{Matroid class} & \textbf{Guarantees and references}\\
\midrule
Rank one / partition & $e$\,\cite{Lindley1961,Dynkin1963,BIKK-APPROX07}. \\
\addlinespace[3pt]
Uniform (rank $r$) & $O(1)$\,\cite{HKP-EC04}; $1+O(r^{-1/2})$\,\cite{Kleinberg-SODA05}; $[1+O(\sqrt{\log r/r})]^{\mathrm P}$\,\cite{SotoTV-SODA18}; finite-rank refinements\,\cite{BJS-MOR14,CCJ-SODA15,AL-TCS21}. \\
\addlinespace[3pt]
Truncated partition & $O(1)$\,\cite{BIK-SODA07,BIKK-JACM18}; also covered by laminar results. \\
\addlinespace[3pt]
Transversal & $16$\,\cite{DP-ICALP08}\ $\to8$\,\cite{KP-ICALP09}\ $\to e$\,\cite{KRTV-ESA13}; $e^{\mathrm P}$\,\cite{SotoTV-SODA18}. \\
\addlinespace[3pt]
Laminar & $O(1)$\,\cite{IW-SODA11}; $1/0.053$\,\cite{HP-Laminar13}; $3\sqrt3e$\,\cite{JSZ-IPCO13}; $9.6$\,\cite{MTW-TOCS16}; $(3\sqrt3)^{\mathrm P}$\,\cite{SotoTV-SODA18}; $4.75$\,\cite{HPZ-ESA24}; $[1/(1-\ln2)]^{\mathrm P}$\,\cite{BLSV-IPCO25}. \\
\addlinespace[3pt]
Graphic & $16$\,\cite{BIK-SODA07}; $3e$\,\cite{BDGIT-SODA09}; $2e$\,\cite{KP-ICALP09}; $4^{\mathrm P}$\,\cite{SotoTV-SODA18}; $(1/0.2504)^{\mathrm P}$\,\cite{BLSV-IPCO25}; $3.95$\,\cite{BHKKMO-ESA25}. \\
\addlinespace[3pt]
Simple / high-girth graphic & Simple: $3.77$\,\cite{BHKKMO-ESA25}, $(1/0.2693)^{\mathrm P}$\,\cite{BLSV-IPCO25}; girth $g$: $e+o_g(1)$\,\cite{BHKKMO-ESA25}. \\
\addlinespace[3pt]
Cographic; regular / MFMC & $3e$\,\cite{Soto-SICOMP13}; $9e$ for regular and max-flow min-cut\,\cite{DK-SICOMP14}. \\
\addlinespace[3pt]
Hypergraphic; matching & $4^{\mathrm P}$ for both\,\cite{SotoTV-SODA18}; matching matroids have a vertex ground set. \\
\addlinespace[3pt]
$k$-column-sparse; $k$-framed & $ke$ for column-sparse\,\cite{Soto-SICOMP13}; $[k^{k/(k-1)}]^{\mathrm P}$ for both ($k\ge2$)\,\cite{SotoTV-SODA18}. \\
\addlinespace[3pt]
Exchange-structured gammoids / packings & $[\mu^{\mu/(\mu-1)}]^{\mathrm P}$ ($\mu\ge2$); semiplanar gammoids: $(4^{4/3})^{\mathrm P}$\,\cite{SotoTV-SODA18}. \\
\addlinespace[3pt]
Rank two & $(1/0.3462)^{\mathrm P}$\,\cite{BLSV-IPCO25}; all rank-two matroids. \\
\addlinespace[3pt]
Low density / small cocircuits & 
$\gamma(\calM)$ for density $\gamma(\calM)$;
$c$ if every nonloop belongs to a cocircuit of size at most $c$
\,\cite{Soto-SICOMP13}. \\
\addlinespace[3pt]
Paving; almost all matroids & $1+O(r^{-1/2})$ for paving; $2+o(1)$ for almost all labelled matroids\,\cite{HN-SIDMA20}. \\
\addlinespace[3pt]
Proper minor-closed over $\mathbb F_p$ & $O_{\mathcal C}(1)$ for fixed prime $p$, \emph{conditional on structural Hypothesis~1}\,\cite{HN-SIDMA20}. \\
\addlinespace[3pt]
$k$-fold unions &  $1+O(\sqrt{\log n/k})$ for $k\gg\log n$\,\cite{GHKQ-kFold25}. \\
\addlinespace[3pt]
Graphic, arrived-elements-only oracle & $36$ \,\cite{DPLPP-Graphic26}. \\
\addlinespace[3pt]
\bottomrule
\end{tabularx}
\caption{Selected  milestones for special classes of matroids. 
$\mathrm P$ indicates a per-element probability guarantee.
Some results assume that the matroid or a suitable
representation is known upfront.}
\label{tab:msp-special-cases-compact}
\end{table}

\section{Random Order Two-Sided Game-of-Googol}\label{sec:GoG}

Fix a matroid $\calM = (V, \calI)$ with $n=|V|$ elements.
We study the following (random order) matroid version of the two-sided
Game-of-Googol~\cite{CCES-SODA20}.

\medskip
\noindent\textbf{RO $2$-sided GoG:}
Each  element $e \in V$ of the matroid $\calM = (V, \calI)$ has two fixed unknown weights $0\leq \ell_e  < u_e$,  which we call its lower and upper weights respectively. 
At the beginning, each element $e$ independently reveals one of its two weights $S_e \in \{\ell_e,u_e\}$ chosen uniformly at random; we call this its \emph{sample} weight. Let $R_e$ denote the other hidden weight, which we call its \emph{real} weight. The real weights are then revealed sequentially  in a uniformly random order (independent of the sample). When $R_e$ is revealed, the  algorithm must immediately accept or reject  $e$. The accepted elements must form an independent set in $\calM$, and the objective is to maximize their total real weight.

\subsection{Notation and Preliminaries} A \emph{configuration} is a weight vector $w=\big(w_e\big)_{e\in V}$ for $w_e \in \mathbb{R}_{\geq 0}$. 
When the configuration is random, we will instead use $\config=\big(\config_e\big)_{e\in V}$. Write $w^{e\leftarrow v}$ (and similarly $\config^{e\leftarrow v}$) for the configuration obtained by
replacing coordinate $e$ by weight $v \in \mathbb{R}_{\geq 0}$.

Since element weights might have ties or equal zero, we formally define the greedy solution.

\begin{algorithm}
\caption{$\OPT(w)$}
    \begin{itemize}[leftmargin=*]
        \item Consider only elements $e$ with strictly positive weights, i.e. $w_e>0$, in decreasing order of their weights; use the same fixed label order to break ties in
every configuration.
    \item Starting from the empty set, select each element $e$
in this order if adding it to the selected set preserves
independence; equivalently, if $e$ is not spanned by the
preceding elements.
    \item Return the set of selected elements.
    \end{itemize}
\end{algorithm}


\noindent  We write $\OPT(w)$ for the returned set.
It is a maximum-weight basis of the restriction to 
$\{e:w_e>0\}$, but need not be a basis of the whole matroid.  


We will  need the well-known property that an element in the optimum continues to be in the optimum even after some other weights are reduced.

\begin{lemma}[Greedy monotonicity] \label{lem:greedy}
Consider any two configurations $w,w'$ such that $w'_e \leq w_e$ for every element $e\in V$. If  $e^\star \in \OPT(w)$ and $w_{e^\star} = w'_{e^\star}$  then $e^\star \in \OPT(w')$.
\end{lemma}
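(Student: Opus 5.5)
The plan is to use the standard characterization of membership in the greedy output. Under the fixed tie-breaking rule, each configuration $w$ induces a strict total order $\succ_w$ on the positive-weight elements: $a \succ_w b$ if $w_a > w_b$, or if $w_a = w_b$ and $a$ precedes $b$ in the label order. By the definition of $\OPT(w)$, an element $e$ with $w_e>0$ belongs to $\OPT(w)$ exactly when $e$ is not spanned by the set $P_w(e)=\{a : w_a>0,\ a\succ_w e\}$ of elements processed before it. The reason is that the greedy set built just before $e$ is considered is a basis of $P_w(e)$, since greedy keeps every prefix spanned. So the lemma reduces to the inclusion $P_{w'}(e^\star)\subseteq P_w(e^\star)$, together with monotonicity of span.

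First I check that $e^\star$ is even eligible under $w'$. Since $e^\star\in\OPT(w)$ we have $w_{e^\star}>0$, and $w'_{e^\star}=w_{e^\star}>0$.

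Next I verify the inclusion. Take $a\in P_{w'}(e^\star)$, so $w'_a>0$ and $a\succ_{w'} e^\star$. From $w_a\ge w'_a>0$, the element $a$ has positive weight under $w$. There are two cases.
\begin{itemize}
\item If $w'_a>w'_{e^\star}$, then $w_a\ge w'_a>w'_{e^\star}=w_{e^\star}$, so $a\succ_w e^\star$.
\item If $w'_a=w'_{e^\star}$ and $a$ precedes $e^\star$ in the label order, then $w_a\ge w_{e^\star}$. Either the inequality is strict, or there is a tie resolved by the same label rule. Both give $a\succ_w e^\star$.
\end{itemize}
Hence $P_{w'}(e^\star)\subseteq P_w(e^\star)$.

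Finally, since $e^\star\in\OPT(w)$, the element $e^\star$ is not in the span of $P_w(e^\star)$. Span is monotone, so $e^\star$ is not in the span of the subset $P_{w'}(e^\star)$ either. Therefore greedy under $w'$ selects $e^\star$.

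The only step needing care is the characterization in the first paragraph, namely that greedy's selected prefix spans all elements processed so far. This is a short induction: each rejected element is spanned by the elements selected before it. The rest is bookkeeping about the tie-breaking order, which the fixed label order makes consistent across configurations.
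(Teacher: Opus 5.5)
Your proposal is correct and follows the paper's proof. Both show that the set of elements preceding $e^\star$ in the greedy order under $w'$ is contained in the corresponding set under $w$, and then conclude by monotonicity of span. You additionally spell out the tie-breaking case analysis and justify the "not spanned by preceding elements" characterization, which the paper takes directly from its definition of greedy.
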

\begin{proof}
The set $P'$ of elements preceding $e^\star$ under the greedy order for $w'$ is contained in the
corresponding set $P$ under $w$, since weights only decrease while
$w'_{e^\star}=w_{e^\star}$ and ties are broken by labels.
Greedy selects $e^\star$ under $w$, so
$e^\star\notin\operatorname{Span}(P)$.
Since $P'\subseteq P$, we also have
$e^\star\notin\operatorname{Span}(P')$, and hence greedy selects
$e^\star$ under $w'$.
\end{proof}


We also use the following  consequence of Brualdi's 
symmetric basis-exchange theorem~\cite{Brualdi1969};
see also~\cite[Theorem~39.12]{Schrijver2003}.
We include a proof for completeness, accounting for our
convention of omitting zero-weight elements. This lemma will form the main intuition behind our algorithm.

 \begin{lemma}[One-coordinate sensitivity]\label{lem:heart}
 If two configurations $w,w'$ differ only at element $e$,
then their greedy optima are either identical, or differ only in the
membership of $e$, or differ by an exchange of $e$ with one
other element $f$. 
\end{lemma}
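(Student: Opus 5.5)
Set $B=\OPT(w)$ and $B'=\OPT(w')$, and let $T=\{f\neq e: w_f>0\}=\{f\ne e: w'_f>0\}$. Outside coordinate $e$ the two configurations agree, so they share the greedy order on $T$ and the label tie-breaking. The plan is to compare both optima with the common optimum on $T$ alone. Let $A=\OPT(w|_T)$ be the greedy output when only elements of $T$ keep their weights (weight of $e$ set to $0$). We will show that each of $B$ and $B'$ is either equal to $A$ or of the form $A+e$ or $A+e-g$ for some $g\in A$. Combining the two descriptions then gives the statement.

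\textbf{Step 1 (inserting $e$ into the greedy order).} Suppose $w_e>0$. The greedy run on $w$ is the greedy run on $T$ with $e$ inserted at its position. Let $P\subseteq T$ be the elements preceding $e$ in this order.
\begin{itemize}
\item Before $e$ is reached, greedy selects $A\cap P$.
\item If $e\in\operatorname{Span}(A\cap P)$, then $e$ is rejected. Every later element sees the same span as in the run on $T$, since adding the rejected $e$ changes nothing. Hence $B=A$.
\item Otherwise $e$ is selected. Each later $f\in T$ is now tested against $\operatorname{Span}(\text{selected}\cup\{e\})$. Rank considerations show this can turn at most one selection of $A$ into a rejection: once the first such $g\in A$ is rejected, the span of the current selected set equals the span in the $T$-run from then on. Hence $B=A+e-g$ or $B=A+e$.
\end{itemize}
The same argument applies to $w'$. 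If $w_e=0$ then simply $B=A$.

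\textbf{Step 2 (the span identity, the main obstacle).} Let $X=A\cap P$, and suppose $e\notin\operatorname{Span}(X)$. Scan the elements $f\in T$ after $e$ in order, and maintain the invariant that the selected set is $Y\cup\{e\}$, where $Y$ is the corresponding prefix of $A$. Before any discrepancy we have $\operatorname{rank}(Y+e)=|Y|+1$.
\begin{itemize}
\item The first discrepancy is some $g\in A$ with $g\notin\operatorname{Span}(Y)$ but $g\in\operatorname{Span}(Y+e)$. Then $\operatorname{Span}(Y+e)=\operatorname{Span}(Y+g)$.
\item From that point on, the selected set $Y+e$ and the $T$-run's set $Y+g$ have equal spans, and both grow by the same elements. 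So no further discrepancy occurs.
\item An element $f\notin A$ is rejected in the $T$-run because $f\in\operatorname{Span}(Y)\subseteq\operatorname{Span}(Y+e)$. So such an $f$ cannot be newly accepted.
\end{itemize}
This is the careful bookkeeping step; it is essentially the matroid exchange axiom applied inside greedy.

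\textbf{Step 3 (combining).} Now $B,B'\in\{A,\ A+e,\ A+e-g\}$, with possibly different $g$'s. Compare the cases.
\begin{itemize}
\item If neither contains $e$, then $B=B'=A$.
\item If exactly one contains $e$, say $B'=A$ and $B=A+e$ or $A+e-g$, the optima differ only in the membership of $e$, or by exchanging $e$ with $g$.
\item If both contain $e$, we need to exclude $B=A+e-g$ and $B'=A+e-g'$ with $g\neq g'$, and also $B=A+e$ together with $B'=A+e-g$.
\end{itemize}
For the last case, assume without loss of generality $w_e\ge w'_e$. By Lemma~\ref{lem:greedy}, applied to a common upper configuration if needed, rather than directly, compare the two runs. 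In both runs $e$ is inserted into the same $T$-order, and the insertion point for $w'$ is later (or the same).

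We claim the removed element $g$ is determined by the fundamental circuit $C$ of $e$ in $A+e$ (when $A+e$ is dependent). Namely, $g$ is the lowest-ranked element of $C$ that comes after the insertion point of $e$, and no element is removed if $C\setminus\{e\}$ lies entirely before the insertion point. Indeed, the first discrepancy in Step 2 occurs exactly when the prefix $Y+e$ first contains a circuit through $e$, and that circuit is $C$. Hence both $g$ and $g'$ are the lowest-ranked element of the same circuit $C\setminus\{e\}$. Since $e$ belongs to both optima, that element lies after both insertion points, so $g=g'$. The identical $C$ also forces consistency of the ``no removal'' case. So when $e$ is in both optima we get $B=B'$, which completes the case analysis.
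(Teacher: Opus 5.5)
Your proof is correct, but it is organized differently from the paper's. The paper compares the two runs directly. Assuming $w'_e<w_e$, it lets $A$ be the common selection on the prefix before $e$'s $w$-position, and shows that the scans continued from $A\cup\{e\}$ and from $A$ disagree on at most one element $f$. It then observes that the $w'$-run follows the scan from $A$ until $e$'s later $w'$-position. The only remaining case split is whether $f$ has already occurred by that position: if so, $e$ is spanned and rejected, giving the exchange; if not, $e$ is accepted and the two runs merge, giving equality.

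You instead describe $\OPT(w)$ and $\OPT(w')$ separately relative to the deletion optimum $A=\OPT(w|_T)$, which forces the extra gluing argument in Step~3. You supply it by identifying the displaced element canonically as the lowest-ranked element of $C\setminus\{e\}$, where $C\subseteq A+e$ is the fundamental circuit; this does not depend on where $e$ is inserted. This costs an appeal to uniqueness of fundamental circuits and a longer case analysis. In exchange, it treats $w$ and $w'$ symmetrically and yields a sharper, weight-free statement: as $w_e$ varies, the greedy optimum takes at most two values, namely $A$ when $e$ is rejected and one fixed set ($A+e$ or $A+e-g$) when $e$ is accepted.

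Two small clean-ups. First, the sentence invoking \Cref{lem:greedy} ``applied to a common upper configuration'' is vestigial and should be deleted, since neither that lemma nor the assumption $w_e\ge w'_e$ is used afterwards. Second, say explicitly that the no-removal case with $e\in B$ occurs exactly when $A+e$ is independent, because a discrepancy at $g$ would give a circuit in $Y+e+g\subseteq A+e$. By contrast, $C\setminus\{e\}$ lying before the insertion point means $e$ is rejected and $B=A$; that is not the $A+e$ case.
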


\begin{proof}
If $w_e=w'_e$, the statement is immediate. Otherwise, assume
$w'_e<w_e$, swapping the roles of $w$ and $w'$ if necessary. The two greedy
scans use the same relative order on $V\setminus\{e\}$, and differ only in
where $e$ is inserted: earlier under $w$, later (or not at all, if $w'_e=0$)
under $w'$. Let $P$ be the set of elements scanned before $e$ in the $w$-order;
both runs select the same set $A$ on $P$.

If $e\in\operatorname{Span}(A)$, the $w$-run rejects $e$.
The $w'$-run either skips $e$ because $w'_e=0$, or reaches it
with a superset of $A$ and also rejects it.
Since $e$ is selected by neither run, all other decisions coincide, and $\OPT(w')=\OPT(w)$.

Otherwise the $w$-run selects $e$, and 
we compare the two runs on the common list of remaining
elements other than $e$
from the sets $A\cup\{e\}$ and $A$. We claim their decisions agree until
at most one element $f$, which is accepted by the $A$-run and rejected by the
$A\cup\{e\}$-run, after which the two current sets have equal span and all later decisions on elements other than $e$ coincide.
Indeed, spans are monotone, so the first disagreement
must have this form; and after it the two sets are $A\cup\{e\}\cup B$ and
$A\cup B\cup\{f\}$, which are independent, of equal size, and the first spans the second, so their spans agree. Here $B$ consists of the elements accepted by both runs after $P$ and before $f$.

First suppose $w'_e>0$.
If this $f$ occurs before $e$'s position in the $w'$-order, then the $w'$-run
reaches $e$ with a set spanning $e$ and rejects it, giving
$\OPT(w)=(\OPT(w')\setminus\{f\})\cup\{e\}$. If no disagreement occurs before that point, then $A\cup B\cup\{e\}$ is independent, so the $w'$-run also selects $e$ and the two runs coincide from then on, giving $\OPT(w')=\OPT(w)$. If
$w'_e=0$, the $w'$-run never scans $e$ and the same dichotomy yields
$\OPT(w)=\OPT(w')\cup\{e\}$ or $\OPT(w)=(\OPT(w')\setminus\{f\})\cup\{e\}$.
\end{proof}

\subsection{Algorithm}

The algorithm is inspired by \Cref{lem:heart}. We maintain a configuration $\config$, which is initialized to sample weights $S$. Before an element $e$ is processed,  coordinate $\config_e$ still equals $S_e$, since only the iteration that considers $e$ can change it.
At arrival of $i$-th element $e$, we compute the  greedy solutions of $\config$, both with  $e$ taking sample weight $S_e$ (i.e., $\OPT(\config)$) and with $e$ taking real weight $R_e$  (i.e., $\OPT(\config^{e\leftarrow R_e})$). 

By \Cref{lem:heart}, changing the weight of $e$ can affect the
membership of at most one other element. We proceed only if
the two greedy sets agree on the elements that arrived before $e$ (the preceding elements are  in  \emph{finalized} set $\final$). 
If this test passes, we accept $e$ exactly when it belongs to
$\OPT(\config^{e\leftarrow R_e})$, and update $\config$ whether
or not $e$ is accepted. If the test fails, we reject $e$ and leave
$\config$ unchanged. In both cases, $e$  is finalized by adding it to $\final$.

Observe that a permitted update can even reduce the weight of $e$ in $\config$: our analysis crucially exploits this symmetry between $S_e$ and $R_e$.

\begin{algorithm} 
\caption{RO $2$-Sided GoG}\label{alg:SSProphSec}
    \begin{itemize}[leftmargin=*]
    \item Let $S_e$ and $R_e$ denote the random sample and real weights of each element $e$, respectively. 
        \item Initialize accepted set $\ALG = \emptyset$, configuration vector $\config = S$, and finalized set $\final = \emptyset$.
    
    \item For $i=1$ to $n$:
\begin{enumerate}
    \item Let $e$ denote the $i$-th element in random order $\Pi$. \label{alg:step1}
    \item If $ \OPT(\config^{e\leftarrow R_e})\cap \final = \OPT(\config )\cap \final $   \Comment{Both have same  finalized elements}\label{alg:step2}
        \begin{enumerate}
            \item If $e\in \OPT(\config^{e\leftarrow R_e})$ then $\ALG \gets \ALG \cup \{e\}$ \Comment{accept $e$}  \label{alg:step2a}
            \item $\config \gets \config^{e\leftarrow R_e}$   \Comment{Update $W$, even if $e$ is not accepted} \label{alg:step2b}
        \end{enumerate}
    \item $\final \gets \final\cup \{e\}$.    \Comment{$\final$ contains all processed/finalized elements}
\end{enumerate}
    \item Return $\ALG$.
\end{itemize}
\end{algorithm}


\begin{theorem}\label{thm:ROGoG}
    \Cref{alg:SSProphSec} always  accepts an independent set and guarantees for each  $e^\star \in \OPT\big((u_e)_{e\in V}\big)$ that  
    \[\Pr[\ALG \ni e^\star \text{ and } R_{e^\star}=u_{e^\star}] ~\geq~ 1/4. \]
\end{theorem}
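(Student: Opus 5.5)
The proof has two parts. The first is a deterministic invariant that gives independence. The second is a reversibility (``snapshot'') property of the configuration updates, which makes the configuration seen at $e^\star$'s arrival independent of the arrival order. Once that holds, \Cref{lem:heart} and \Cref{lem:greedy} reduce the probability bound to two independent fair coin flips.

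\emph{Independence.} I would prove by induction over the iterations the invariant $\ALG\subseteq\OPT(\config)\cap\final$.
\begin{itemize}
\item If the test in step~\ref{alg:step2} fails, none of $\ALG$, $\config$ and the old $\final$ changes, and $\final$ only grows, so the invariant is preserved.
\item If the test passes, write $\config'=\config^{e\leftarrow R_e}$. Then $\ALG\subseteq\OPT(\config)\cap\final=\OPT(\config')\cap\final$. The element $e$ is added only if $e\in\OPT(\config')$, so the new $\ALG$ lies in $\OPT(\config')\cap(\final\cup\{e\})$.
\end{itemize}
Hence $\ALG$ is always a subset of a greedy solution, and so it is independent.

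\emph{Snapshot step.} Fix the arrival order $\Pi=\pi$. Throughout, every coordinate $\config_f$ lies in $\{\ell_f,u_f\}$, and $R_e$ is the value in $\{\ell_e,u_e\}$ other than $\config_e=S_e$ at the moment $e$ is processed. So the iteration for $e$ is a deterministic map $T_e$ on $\prod_f\{\ell_f,u_f\}$: it flips coordinate $e$ if the test passes, and otherwise leaves $\config$ unchanged.
\begin{itemize}
\item The test compares $\OPT(\config)\cap\final$ with $\OPT(\config^{e\leftarrow\text{flip}})\cap\final$. This condition is symmetric under flipping coordinate $e$, and $\final$ depends only on $\pi$.
\item Therefore $T_e$ is an involution, and in particular a bijection.
\item Composing these maps, the configuration just before $e^\star$ is processed is a bijective image of $S$. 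Since $S$ is uniform on $\prod_f\{\ell_f,u_f\}$, so is this configuration, and this holds for every $\pi$.
\end{itemize}
Call this configuration $\config^\star$. Then $\config^\star$ is uniform on the product space and independent of $\Pi$. This is the step I expect to be the crux, since a priori the algorithm's adaptive decisions could correlate the configuration with the order. The involution argument is what removes that correlation.

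\emph{Probability bound.} When $e^\star$ arrives, we have $S_{e^\star}=\config^\star_{e^\star}$. So the event $R_{e^\star}=u_{e^\star}$ is exactly the event $\config^\star_{e^\star}=\ell_{e^\star}$, which has probability $1/2$ and is independent of $\config^\star_{-e^\star}$ and of $\Pi$.
\begin{itemize}
\item Let $\widetilde\config=(\config^\star)^{e^\star\leftarrow u_{e^\star}}$. Then $\widetilde\config\le u$ coordinatewise and agrees with $u$ at $e^\star$. By \Cref{lem:greedy}, $e^\star\in\OPT(\widetilde\config)$, so step~\ref{alg:step2a} accepts $e^\star$ whenever the test passes.
\item By \Cref{lem:heart} applied to $\widetilde\config$ and $\widetilde\config^{e^\star\leftarrow\ell_{e^\star}}$, the two greedy sets differ only in $e^\star$ and possibly one further element $f$. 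This $f$ is a function of $\config^\star_{-e^\star}$ alone.
\item The test can fail only if $f$ exists and $f\in\final$, that is, only if $f$ arrives before $e^\star$. Since $\config^\star$ is independent of $\Pi$, conditioned on $\config^\star$ the order is still uniform. So $f$ precedes $e^\star$ with probability at most $1/2$, independently of the coin $\config^\star_{e^\star}$.
\end{itemize}
Multiplying the two independent probabilities gives $\Pr[\ALG\ni e^\star\text{ and }R_{e^\star}=u_{e^\star}]\ge \tfrac12\cdot\tfrac12=\tfrac14$.
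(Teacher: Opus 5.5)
Your proof is correct and follows the paper's route. It uses the same feasibility invariant and the same reversibility observation: the Step~2 test is symmetric under flipping coordinate $e$, so the snapshot is uniform and independent of the order. It then applies \Cref{lem:greedy} and \Cref{lem:heart} in the same way to reduce acceptance to $e^\star$ preceding its exchange partner. Your phrasing of the snapshot bijection as a composition of involutions is equivalent to the paper's backward reconstruction of the unique sample, and the paper itself uses the involution phrasing in \Cref{sec:intersection}.
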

\begin{proof}
    For feasibility, we maintain the invariant that $\ALG \subseteq \OPT(\config)$, which is always feasible. It holds initially. 
    Every previously accepted element $e$ belongs to $F$, and every subsequent update ensures that $e$ belongs to $\OPT(\config)$ due to Step~\ref{alg:step2}. Moreover, Step~\ref{alg:step2a}  ensures that the newly accepted element belongs to the proposed optimal set. Thus the invariant holds after every iteration.
    
    Fix $e^\star\in\OPT\big((u_e)_{e\in V}\big)$. For  the rest of the proof,  condition on $R_{e^{\star}}=u_{e^\star}$ (and  hence  $S_{e^\star}=\ell_{e^\star}$), which happens with probability $1/2$. It suffices to prove that
    \[ \Pr[\ALG \ni e^\star \mid R_{e^\star}=u_{e^\star}] ~\geq~ 1/2. \]

    Let $\config^\star$ denote  the (random) configuration $\config$ in Step~\ref{alg:step1} of  \Cref{alg:SSProphSec} when it encounters $e^\star$. Observe that $\config^\star_{e^\star} = \ell_{e^\star}$ since we have conditioned on $S_{e^\star}=\ell_{e^\star}$ and no earlier element changes coordinate $e^\star$ of $\config$. 
    Moreover, $e^\star$ will pass Step~\ref{alg:step2a} as 
    $e^\star \in \OPT((\config^\star)^{e^\star\leftarrow u_{e^\star}})$  by \Cref{lem:greedy}. Hence, $e^\star $ will be accepted iff it passes Step~\ref{alg:step2} of the algorithm. We will prove that this occurs with probability at least $1/2$.
    
    Condition further on $\config^\star = w$.  Moreover, $\config^\star_{e^\star}=S_{e^\star}$, so conditioning on
$\config^\star=w$ with $w_{e^\star}=\ell_{e^\star}$ already implies that $R_{e^\star}=u_{e^\star}$. We consider two cases:

\smallskip
\noindent\emph{Case 1:}     If $\OPT(w^{e^\star\leftarrow u_{e^\star}})$ is either
$\OPT(w)$ or $\OPT(w)\cup\{e^\star\}$, then the test in Step~\ref{alg:step2}
passes automatically  since $e^\star\notin F$, and $e^\star$ is accepted. 

\smallskip
\noindent\emph{Case 2}: Otherwise, there is a unique exchange partner $f^\star$
as given by \Cref{lem:heart} such that $\OPT(w^{e^{\star}\leftarrow u_{e^\star}}) = (\OPT(w) \setminus \{f^\star\})\cup \{e^\star\}$. 
    Since $e^\star\notin F$, we have $e^\star$ passes Step~\ref{alg:step2} iff $f^\star$ is not in $F$, i.e., the real weight
of $f^\star$ has not yet been revealed (in other words, $e^\star$ comes before $f^\star$ in arrival order $\Pi$).
For fixed $w$, the exchange partner $f^\star$ is fixed, and the following
\Cref{lem:unifPerm} shows that the arrival order remains uniform
after conditioning on $\config^\star=w$.
    Hence,
    \[
    \Pr[\ALG \ni e^\star \mid \config^\star =w]  \quad = \quad  \Pr[ e^\star \text{ before } f^\star \mid  \config^\star =w] \quad \overset{\text{\Cref{lem:unifPerm}}}{=} \quad 1/2.
    \]
    Hence,     we have that
     \begin{align*}
         \Pr[\ALG \ni e^\star \mid R_{e^\star}=u_{e^\star}]  &= \sum_{w: w_{e^\star}=\ell_{e^\star}} \Pr[\config^\star =w \mid R_{e^\star}=u_{e^\star}] \cdot \Pr[\ALG \ni e^\star \mid R_{e^\star}=u_{e^\star}, \config^\star =w] \\
         &\geq  \sum_{w: w_{e^\star}=\ell_{e^\star}} \Pr[\config^\star =w \mid R_{e^\star}=u_{e^\star}] \cdot 1/2  \quad = \quad 1/2,
    \end{align*}
    where the inequality uses that the conditional probability is  $1$ in the first case and exactly  $1/2$ in the exchange case.  Multiplying by
$\Pr[R_{e^\star}=u_{e^\star}]=1/2$
completes the proof.
\end{proof}

In the following two lemmas, fix an element $e^\star$, and let
$\config^\star$ denote the configuration immediately before its
real weight is revealed. The matroid and all weight pairs are
fixed, and all probabilities refer to the original,
unconditioned experiment.

We first prove the \emph{snapshot lemma} (\Cref{lem:unique}):
given the arrival order and a ``snapshot'' of the configuration
faced by $e^\star$, we can uniquely reconstruct the initial
sample, and hence the entire trajectory of the algorithm,
including its states and decisions. This reconstruction is
only for the analysis. 

    \begin{lemma}[Snapshot lemma]\label{lem:unique}
For every $w\in \prod_{e\in V}\{\ell_e,u_e\}$ and every arrival order $\pi$, there exists
a unique sample vector $S_{w,\pi}$ that makes $e^\star$ encounter
configuration $\config^\star=w$.
Consequently, the entire trajectory is uniquely determined
by $w$ and $\pi$.
    \end{lemma}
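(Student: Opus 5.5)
Fix the arrival order $\pi$ and let $e^\star$ be in position $k$, so $\pi(1),\dots,\pi(k-1)$ are processed before $e^\star$. Throughout the run the state is the configuration $\config$, and $\final$ is determined by $\pi$ and the step index. The plan is to show that, for fixed $\pi$, each iteration acts as a bijection on the configuration space $\prod_{e\in V}\{\ell_e,u_e\}$. Composing the first $k-1$ such bijections then sends the initial sample $S=\config_0$ to $\config^\star$ bijectively. Inverting this composition yields the unique sample $S_{w,\pi}$. The whole trajectory is then determined: from $S_{w,\pi}$ and $\pi$ the algorithm is deterministic, since $R_e$ is the complement of $S_e$ in $\{\ell_e,u_e\}$, and the acceptance decisions are functions of the states.

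First I would make the single iteration explicit. Consider step $i$ with element $e=\pi(i)$ and finalized set $\final_i=\{\pi(1),\dots,\pi(i-1)\}$. Before the step, $\config_e=S_e$, hence $R_e$ is the other value $\bar{\config}_e$ in $\{\ell_e,u_e\}$. So the step is the map
\[
T_i(w)=\begin{cases} w^{e\leftarrow \bar w_e} & \text{if } \OPT(w^{e\leftarrow \bar w_e})\cap \final_i=\OPT(w)\cap\final_i,\\ w & \text{otherwise,}\end{cases}
\]
where $\bar w_e$ denotes the other element of $\{\ell_e,u_e\}$. The key observation is that the test condition is symmetric in the pair $\{w,\,w^{e\leftarrow\bar w_e}\}$. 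Swapping the two configurations swaps the two sides of the equality, and $e\notin\final_i$ does not matter here. Hence $w$ passes if and only if its partner $w^{e\leftarrow \bar w_e}$ passes. Therefore $T_i$ is an involution: it swaps each passing pair and fixes each failing pair pointwise. In particular $T_i$ is a bijection on $\prod_{e}\{\ell_e,u_e\}$.

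Then $\config^\star=T_{k-1}\circ\cdots\circ T_1(S)$. Each $T_i$ depends only on $\pi$ (through $e$ and $\final_i$) and not on $S$, because $R_e$ is read off from the current coordinate. So the composition is a bijection, and $S_{w,\pi}=T_1\circ\cdots\circ T_{k-1}(w)$ is the unique preimage. The main point to check carefully is the claim that the step is a function of the current configuration alone. This requires that $\config_e$ still equals $S_e$ when $e$ arrives, which holds because only iteration $i$ modifies coordinate $e$. It also requires noting that the intermediate configurations remain in the product space, so the $T_i$ compose. With these in place, the symmetry of the test is the crux, and it is immediate from the form of Step~\ref{alg:step2}.
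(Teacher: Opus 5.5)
Your proof is correct and is essentially the paper's argument. The paper reverses the steps one at a time by backward induction from position $k-1$ down to $1$, relying on your key observation that the Step~2 test compares the same two optima with the sides interchanged, so each step can be undone from the post-step configuration. Packaging this as a composition of involutions $T_i$ on $\prod_{e}\{\ell_e,u_e\}$ is the same idea, and it is the form the paper itself uses in the intersection section.
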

    
    \begin{proof}
    We will prove that there is a unique sample $S_{w,\pi}$ that resulted in $\config^\star=w$ for arrival order $\pi$. Given  this $S_{w,\pi}$ (along with order $\pi$), the trajectory of the algorithm is deterministic.
    
    Let $k$ denote the position of $e^\star$ in order $\pi$, i.e., $e^\star = \pi(k)$.
    Firstly, observe that all elements at positions $t \geq k$ in $\pi$ have not been processed when $e^\star$ arrives, so  $S_{\pi(t)} = w_{\pi(t)}$.

    Next, we use backward induction (reverse order from $k-1$ to $1$) to recover the sample vector $S_{w,\pi}$. We work backward because the test at position $t$ uses the sample weights of later elements, which may differ from their
weights in the snapshot and must therefore be recovered first.
Inductively, suppose $S_{\pi(t+1)},\ldots,S_{\pi(n)}$ have
been uniquely recovered; the preceding observation establishes
the base case $t=k-1$. Then 
    we know every coordinate of $W$ immediately before processing
$\pi(t)$, except possibly coordinate $\pi(t)$:
coordinates at later positions still have their reconstructed
sample weights, while coordinates at earlier positions have
their weights in $w$ because they are never updated again. Thus, the only thing unclear is whether the sample contains the lower or upper weight at position $t$. 
    
    The crucial observation is that irrespective of which weight is in the sample at position $t$, the test in Step~\ref{alg:step2} of the algorithm returns the same answer. Indeed, the finalized set is $F=\{\pi(1),\ldots,\pi(t-1)\}$, and the two possibilities
compare the same two optimal sets against $F$, with the
sides of the equality interchanged.  If the test passes, then we know that sample and real will be swapped, and in the other case they won't be swapped. In either case, since we know $w_{\pi(t)}$, we can calculate the sample at position $t$ by reversing the step. Each reverse step therefore determines exactly one valid
sample entry. Reversing all steps produces a unique sample
vector, and running the algorithm forward from this vector
gives $\config^\star=w$, proving both existence and uniqueness.
    \end{proof}

    For each fixed arrival order, \Cref{lem:unique} gives a bijection
between initial samples and snapshots. Since all samples are
equally likely, the snapshot is uniform regardless of the
arrival order. This yields the following independence property. 

    \begin{lemma}[Independence of $\config^\star$ and $\Pi$] \label{lem:unifPerm}
    For any configuration $ w \in \prod_{e\in V}\{\ell_e,u_e\}$ and
    any  permutation $\pi$, 
        \[\Pr[\Pi = \pi \mid \config^\star = w] \quad =\quad  \frac{1}{n!} .\]
    \end{lemma}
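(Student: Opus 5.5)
We compute the joint probability $\Pr[\Pi=\pi,\ \config^\star=w]$ directly and show that it does not depend on $\pi$. Throughout, the matroid, the weight pairs $(\ell_e,u_e)$, and the element $e^\star$ are fixed.

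The randomness consists of the arrival order $\Pi$, which is uniform over all $n!$ permutations, and the sample vector $S$. The sample is uniform over $\prod_{e}\{\ell_e,u_e\}$, so each of the $2^n$ vectors has probability $2^{-n}$. It is also independent of $\Pi$. Once $S$ and $\pi$ are fixed, \Cref{alg:SSProphSec} is deterministic, so $\config^\star$ is a deterministic function $\Phi_\pi(S)$.

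\textbf{Step 1 (bijection).} Fix $\pi$. By \Cref{lem:unique}, every $w\in\prod_e\{\ell_e,u_e\}$ has exactly one preimage $S_{w,\pi}$ under $\Phi_\pi$.

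We also need the range of $\Phi_\pi$ to lie in the same product set. This holds because every update in Step~\ref{alg:step2b} replaces a coordinate $S_e$ by $R_e$, and both lie in $\{\ell_e,u_e\}$.

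So $\Phi_\pi$ maps a finite set of size $2^n$ into itself, and every point has exactly one preimage. Hence $\Phi_\pi$ is a bijection.

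\textbf{Step 2 (joint probability).} Using Step 1 and the independence of $S$ and $\Pi$,
\[
\Pr[\Pi=\pi,\ \config^\star=w]=\Pr[\Pi=\pi]\cdot\Pr[S=S_{w,\pi}]=\frac{1}{n!}\cdot 2^{-n}.
\]
This value does not depend on $\pi$ (or on $w$).

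\textbf{Step 3 (conditioning).} Summing the joint probability over all $\pi$ gives $\Pr[\config^\star=w]=2^{-n}$. Dividing,
\[
\Pr[\Pi=\pi\mid\config^\star=w]=\frac{2^{-n}/n!}{2^{-n}}=\frac{1}{n!},
\]
which is the claim.

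The only step that needs care is Step 1. The statement of \Cref{lem:unique} gives existence and uniqueness of a preimage, and the range argument above confirms the map is a bijection of the product set. The real content, namely that each reverse step is well defined because the Step~\ref{alg:step2} test is symmetric in $S_e$ and $R_e$, is already established in \Cref{lem:unique}. The remaining point to verify is that $\config^\star$ is defined at the arrival of $e^\star$ for every $\pi$, which holds since $e^\star$ appears in every permutation.
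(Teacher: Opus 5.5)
Your proposal is correct and follows the paper's proof. Like the paper, you fix $\pi$, use the Snapshot lemma to get $\Pr[\config^\star=w\mid\Pi=\pi]=2^{-n}$, sum over $\pi$, and apply Bayes' rule. Your extra range argument showing that $\Phi_\pi$ is a bijection is harmless but unnecessary, since the computation only uses that $w$ has exactly one preimage.
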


    \begin{proof}
    Fix a permutation $\pi$. Conditional on $\Pi=\pi$, the sample
vector is uniform over the $2^n$ configurations in $\prod_{e\in V}\{\ell_e,u_e\}$.
By  \Cref{lem:unique}, exactly one of these samples produces
$\config^\star=w$. Hence,
    $\Pr[\config^\star=w\mid\Pi=\pi]=2^{-n}$ for every  $w$ and $\pi$,
which implies 
\[ \Pr[\config^\star=w] \quad = \quad  \sum_{\pi} \Pr[\Pi=\pi] \cdot \Pr[\config^\star=w\mid\Pi=\pi] \quad = \quad  \sum_{\pi} \frac{1}{n!} \cdot 2^{-n} \quad  = \quad  2^{-n}.\]
    Hence,
        \[
        \Pr[\Pi = \pi \mid \config^\star = w]\quad  = \quad \Pr[\Pi = \pi] \cdot  \frac{\Pr[\config^\star = w \mid \Pi = \pi]  }{\Pr[\config^\star=w]} \quad = \quad \frac{1}{n!} \cdot \frac{2^{-n}}{2^{-n}} \quad = \quad \frac{1}{n!}.     \qedhere
        \]
     \end{proof}

\section{Matroid Secretary}

We reduce matroid secretary to RO $2$-sided GoG and apply
\Cref{thm:ROGoG}. The  idea of the reduction is to pair each secretary value with zero value, generate the GoG samples by discarding a random prefix of \MSP, and simulate the zero-valued arrivals by interleaving discarded elements with the remaining matroid secretary arrivals.

Fix an unknown matroid $\calM = (V, \calI)$ with $n=|V|$ elements and 
unknown nonnegative element values $v_e\geq0$  for $e\in V$. 
The number of elements $n$ is known in advance, and ties are
broken using fixed element labels, not their arrival positions.
We call the secretary inputs \emph{values}, to distinguish
them from the weights in the virtual GoG instance.
The values are fixed before a uniformly random  arrival
order $\pi'$ is drawn.
We may assume $v_e>0$ for every $e$: replacing $v$ by $v+\mathbf 1$ leaves the greedy order of $\{e : v_e>0\}$ unchanged and merely appends the remaining elements at the end, so $\OPT(v)\subseteq\OPT(v+\mathbf 1)$ and a per-element
guarantee for $\OPT(v+\mathbf 1)$ implies one for $\OPT(v)$.

\begin{restatable}{theorem}{matroidSec}\label{thm:matroid-secretary}
    There exists a randomized online algorithm such that for every instance of the matroid secretary problem with fixed, initially unknown non-negative values $(v_e)_{e\in V}$ and known $n=|V|$, it always accepts an independent set and accepts each element  $e^\star \in \OPT((v_e)_{e\in V})$ with probability at least $1/4$. The algorithm does not need to know the  matroid upfront: it only needs independence oracle access to the already arrived elements. 
\end{restatable}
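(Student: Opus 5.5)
We reduce to \Cref{thm:ROGoG} by building a virtual RO $2$-sided GoG instance with $\ell_e=0$ and $u_e=v_e>0$ for every $e\in V$. In this instance $\OPT\big((u_e)_{e\in V}\big)=\OPT(v)$, and accepting $e$ with $R_e=u_e$ in the virtual game corresponds to accepting $e$ in the secretary game. So it suffices to simulate one run of \Cref{alg:SSProphSec} on this instance, using only the secretary arrivals $\pi'$ plus independent randomness, so that the joint law of the sample vector $S$ and the virtual arrival order $\Pi$ is exactly right. The two requirements are that $S$ is uniform over $\prod_e\{0,v_e\}$ and that $\Pi$ is a uniform permutation independent of $S$.

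\textbf{Simulation.} First draw $m\sim\mathrm{Bin}(n,1/2)$. Observe the first $m$ secretary arrivals without accepting; call this prefix $D$ and declare $S_e=v_e$ (hence $R_e=0$) for $e\in D$. Every later element $e$ has $S_e=0$ and $R_e=v_e$. Because $\pi'$ is uniform and $m$ is drawn independently, $D$ is a uniformly random subset of $V$, so $S$ has the correct i.i.d.\ uniform law. Next draw an independent uniform interleaving of the $m$ elements of $D$, taken in a uniformly random internal order, with the $n-m$ remaining arrivals, which keep their secretary order. Concretely, choose a uniformly random set of $m$ positions among $n$ for the $D$-elements. The resulting sequence $\Pi$ is a uniform permutation independent of $D$: conditioned on $D$, the order of the non-$D$ elements is uniform (they are a uniform random suffix of $\pi'$), the order within $D$ is uniform, and the interleaving is uniform.

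Run \Cref{alg:SSProphSec} along $\Pi$. When a virtual arrival is an element of $D$, its real weight $0$ is already known, so we process it internally at once: update $\config$ if the test passes. It is never accepted, since zero-weight elements never belong to greedy solutions. When a virtual arrival is the next secretary element $e$, we reveal $v_e$ and accept iff the virtual algorithm accepts. The zero-weight virtual steps lying between two real arrivals can be executed just before the next real arrival, because they only need $D$ and the current configuration.

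\textbf{Oracle and knowledge issues.} Every greedy computation concerns $\OPT(\config)$ or $\OPT(\config^{e\leftarrow R_e})$, and these only involve elements of positive weight. Positive coordinates belong either to $D$ (sample $v_e$) or to elements whose real weight $v_e$ has been revealed. Elements not yet arrived in the secretary order have weight $S_e=0$ and are ignored. Hence only independence queries on already arrived elements are needed, and the algorithm never needs to know the matroid; knowing $n$ suffices to draw $m$ and the interleaving. Feasibility follows from the invariant $\ALG\subseteq\OPT(\config)$ in \Cref{thm:ROGoG}. Applying \Cref{thm:ROGoG} then gives $\Pr[e^\star\in\ALG,\ R_{e^\star}=u_{e^\star}]\ge 1/4$, which is the desired bound.

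The main obstacle is verifying the joint law: $S$ must be uniform and $\Pi$ must be uniform and independent of $S$, given that the non-$D$ elements are forced into secretary order. I would handle this with a direct counting argument. The map from $(m,\pi',\text{internal order},\text{positions})$ to $(D,\Pi)$ is uniform-to-uniform: every pair $(D,\Pi)$ arises from the same number of tuples, weighted by $\binom{n}{m}2^{-n}$.
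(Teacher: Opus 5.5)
Your proposal is correct and follows the paper's proof essentially step for step: the same $\mathrm{Bin}(n,1/2)$ discarded prefix, the same uniformly random marked positions for the interleaving, the same processing of zero-real-weight steps just before the next physical arrival with oracle queries only on arrived elements, and the same count $\binom{n}{k}^{-1}\frac{1}{k!}\frac{1}{(n-k)!}=\frac{1}{n!}$. Drawing a fresh internal order for $D$ instead of reusing its $\pi'$-order makes no difference. The one step you skip is justifying $v_e>0$, which the GoG model needs ($\ell_e<u_e$); the paper gets it by replacing $v$ with $v+\mathbf 1$ and noting $\OPT(v)\subseteq\OPT(v+\mathbf 1)$.
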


\begin{proof}
For any instance of \MSP with (unknown) element values $(v_e)_{e\in V}$, our reduction defines a \emph{virtual} instance of RO $2$-sided GoG with $\ell_e =0$ and $u_e=v_e$ for all $e \in V$. Next, we discuss how to generate a random virtual sample $S$ and a virtual arrival order $\pi$  for this RO $2$-sided GoG instance, using the arrival order $\pi'$ for \MSP and fresh randomness.

\medskip
\noindent\textbf{Generating  virtual sample.} To generate the sample $S$ for $2$-sided GoG, we choose a random $K$ from the binomial distribution $\text{Bin}(n,1/2)$, independently of the arrival order $\pi'$, and then observe and discard the first $K$ elements of \MSP arriving in order $\pi'$. Let $\mathcal{K}$ denote this set of discarded elements, so $K=|\mathcal{K}|$. Now we define the sample and real weights for GoG instance  as follows:
\[
    (S_e,R_e) = (v_e,0) \text{ for } e\in \mathcal{K} \quad \text{and} \quad (S_e,R_e) = (0,v_e) \text{ for } e\not\in \mathcal{K}.
\]
Observe that in the sample $S$, each element $e$ takes weight $\ell_e=0$ or $u_e=v_e$ uniformly and independently, as desired by RO $2$-sided GoG.


\medskip
\noindent\textbf{Generating virtual order.} We generate $\pi$ for the $2$-sided GoG using the random order $\pi'$ of \MSP.
Notice that we cannot set $\pi = \pi'$: although it would still imply that $\pi$ is a uniformly random permutation, virtual order $\pi$ will not be independent of the virtual sample $S$; e.g., all $K$ elements with $S_e = u_e$ will arrive before those with $S_e =0$. 
To obtain the required independence, we generate $\pi$
as follows (also see \Cref{fig:virtual-order}), using fresh independent randomness:
\begin{enumerate}
    \item First, mark a uniformly chosen $K$-element subset
of the $n$ virtual positions, i.e., one of the  $\binom nK$ subsets. 
    \item Place the $K$ discarded elements from \MSP into these $K$ marked positions in the order given by $\pi'$. In other words, the projection of $\pi$ on these marked positions is the same as projection of $\pi'$ on $\mathcal{K}$.

   \item Fill the unmarked positions of $\pi$ with the remaining
secretary elements in their actual arrival order $\pi'$,
as they arrive.
In other words, the projection of $\pi$ on these unmarked positions is the same as projection of $\pi'$ on $V\setminus\mathcal{K}$.
\end{enumerate}

\begin{figure}[t]
    \centering
\begin{tikzpicture}[
    x=1.12cm,y=1cm,
    slot/.style={draw=black!65,fill=white,minimum width=0.93cm,
                 minimum height=0.62cm,inner sep=0pt,font=\normalsize},
    ghost/.style={slot,fill=black!13},
    livearrow/.style={-{Latex[length=1.6mm]},draw=black!42,line width=0.45pt},
    ghostarrow/.style={-{Latex[length=1.8mm]},draw=black!85,
                      line width=0.65pt,densely dashed},
    every node/.style={font=\small}
]
\foreach \i/\lab in {1/a_1,2/a_2,3/a_3}
    \node[ghost] (p\i) at (\i,0) {$\lab$};
\foreach \i/\lab in {4/b_1,5/b_2,6/b_3,7/b_4,8/b_5}
    \node[slot] (p\i) at (\i,0) {$\lab$};
\node[anchor=east,align=right] at (0.38,0)
    {Actual order $\pi'$};
\draw[densely dotted,black!60] (3.5,-0.43)--(3.5,0.48);
\draw[decorate,decoration={brace,amplitude=4pt}]
    ($(p1.north west)+(0,0.12)$)--($(p3.north east)+(0,0.12)$)
    node[midway,above=6pt] {Discarded prefix $\mathcal K$};
\draw[decorate,decoration={brace,amplitude=4pt}]
    ($(p4.north west)+(0,0.12)$)--($(p8.north east)+(0,0.12)$)
    node[midway,above=6pt] {Remaining actual arrivals};
\foreach \i/\lab in {1/b_1,3/b_2,4/b_3,6/b_4,8/b_5}
    \node[slot] (v\i) at (\i,-2.05) {$\lab$};
\foreach \i/\lab in {2/a_1,5/a_2,7/a_3}
    \node[ghost] (v\i) at (\i,-2.05) {$\lab$};
\node[anchor=east] at (0.38,-2.05) {Virtual order $\pi$};
\foreach \from/\to in {4/1,5/3,6/4,7/6,8/8}
    \draw[livearrow] (p\from.south)--(v\to.north);
\foreach \from/\to in {1/2,2/5,3/7}
    \draw[ghostarrow,preaction={draw=white,line width=2.1pt}]
        (p\from.south)--(v\to.north);
\foreach \i in {2,5,7}
    \node[anchor=north,font=\scriptsize] at (\i,-2.43) {marked};
\node[anchor=east] at (0.38,-3.00) {Virtual real weight};
\foreach \i/\wt in {1/v_{b_1},2/0,3/v_{b_2},4/v_{b_3},5/0,6/v_{b_4},7/0,8/v_{b_5}}
    \node at (\i,-3.00) {$\wt$};
\end{tikzpicture}
    \caption{{\small Generating the virtual order $\pi$ from the actual order $\pi'$. The marked positions  in  $\pi$ form a uniformly random $|\mathcal K|$-subset of $\{1,\ldots, n\}$ drawn independently of $\pi'$. 
    The relative orders of both the discarded prefix
    and the remaining actual arrivals are preserved. Shaded elements are   processed virtually with real weight $0$ and unshaded elements are processed  upon their actual arrivals with revealed values.}}
    \label{fig:virtual-order}
\end{figure}

\begin{obs}\label{obs:algSimulate}
While the secretary elements arrive in order $\pi'$, we can
simulate \Cref{alg:SSProphSec} for $2$-sided GoG in virtual order $\pi$ using only independence queries on the actually arrived elements.
\end{obs}

\begin{proof}[Proof of \Cref{obs:algSimulate}]
After observing and rejecting the first $K$ physical arrivals,
we initialize $W=S$ and $F=\emptyset$ in \Cref{alg:SSProphSec}.
At a marked position of $\pi$, we process the corresponding
discarded element with real weight $0$.
At an unmarked position, we observe the next physical arrival
in $\pi'$, process it with its revealed value, and follow the
algorithm's acceptance decision.
We perform every permitted update to $\config$ and add every virtually
processed element to $F$, including elements at marked positions.
Discarded elements are never accepted, since their real weight
is zero. Thus $F$ records the virtual past, not the actual past.
We process any intervening marked positions before observing
the next physical arrival, and decide on that arrival before
inspecting another one.

Every positive-weight element in either the current or proposed
configuration has already arrived: initially these
elements belong to $\mathcal K$, and subsequently a positive
weight can be introduced only when its element  arrives.
Since $\OPT$ runs greedy only on positive-weight elements,
all its independence queries involve  arrived elements.
Unseen coordinates can simply be stored implicitly as zero.
\end{proof}

\paragraph{Analysis.} 
The main lemma is that the virtual order $\pi$ is a uniformly random permutation independent of the virtual sample $S$.

\begin{lemma}\label{lem:indepPermSet}
Let $\Pi$ denote the random virtual permutation generated above.  Then, for every permutation $\pi$ and every sample vector 
$s \in \prod_{e\in V}\{0,v_e\}$, we have
    \[\Pr[\Pi=\pi \mid  S=s] \quad = \quad \frac{1}{n!}.\]
\end{lemma}

\begin{proof} [Proof of \Cref{lem:indepPermSet}]
Since every $v_e>0$, conditioning on $S=s$ identifies
$\mathcal K=\{e:s_e>0\}$ and its size $k$.

Conditioned on $S=s$, the order $\pi'$ is uniform among the $k!(n-k)!$ orders whose first $k$ are exactly $\mathcal{K}$; hence the relative order of $\mathcal{K}$ and $V \setminus \mathcal{K}$ are independent and uniform. 
For any permutation $\pi$ of $V$, obtaining $\Pi=\pi$
requires one particular marked-position set, one particular
order of the discarded elements, and one particular suffix order. Hence, we have
\[
    \Pr[\Pi=\pi \mid S=s]
    \quad =\quad \frac{1}{\binom nk}\frac{1}{k!}\frac{1}{(n-k)!}
    \quad = \quad \frac1{n!}.    \qedhere
\]
\end{proof}

By \Cref{lem:indepPermSet},  the virtual order $\pi$ is independent of virtual sample $S$, and both are uniform. We will execute the  $2$-sided GoG algorithm from \Cref{thm:ROGoG} on virtual order $\pi$, which we know is possible by \Cref{obs:algSimulate}. 
Let $\ALG$ be the output of this algorithm. Let 
$A=\ALG\setminus\mathcal K$ be the matroid secretary output.
Since $\ALG$ is an independent set of the matroid, so is $A$. 
Moreover, $R_e=v_e$ holds exactly when $e\notin\mathcal K$. Fix
$e^\star\in\OPT((v_e)_{e\in V})
=\OPT((u_e)_{e\in V})$. We have
\[
    \Pr[e^\star\in A]
    \quad =\quad 
    \Pr[e^\star\in\ALG \text{ and }
        R_{e^\star}=v_{e^\star}]
    \quad \overset{\text{\Cref{thm:ROGoG}}}{\geq} \quad \frac14.    \qedhere
\]
\end{proof}

\medskip
\noindent \textbf{Direct secretary algorithm.}
\Cref{alg:matroidSec} writes the above simulation directly in
secretary terminology. For $T\subseteq V$, write $\OPT(T)$ for
greedy on $T$ using the values $v_e$ and the same fixed
tie-breaking rule. Choosing $X$ uniformly from $\{0,1\}^n$ is
equivalent to drawing $K\sim\operatorname{Bin}(n,1/2)$ and then
marking a uniformly random $K$-subset of the virtual positions.
Under the positive-value convention above, $C$ is exactly the
set of positive coordinates of the virtual configuration $W$:
a sample step proposes a deletion, and a real step proposes an
insertion. Thus the updates and acceptance decisions coincide
with the simulation, and \Cref{thm:matroid-secretary} applies.
Here $F$ records the virtual past, not all actually observed
elements. The algorithm uses $O(n^2)$ independence queries, since each of
its $n$ virtual steps requires at most two greedy scans on at
most $n$ elements.

\begin{algorithm} 
\caption{Matroid Secretary} \label{alg:matroidSec}
 \begin{itemize}[leftmargin=*]
    \item Draw a uniformly random  vector $X$ in $\{0,1\}^n$, and let $K = \|X\|_1$ be the number of $1$s in $X$.  
    \item Discard the first $K$ arrivals and call them sample $S$. The remaining elements are called reals.  
    \item Initialize accepted set $\ALG = \emptyset$, current set $C =S$, and finalized set $\final = \emptyset$.
    
    \item For $i=1$ to $n$:
\begin{enumerate}
    \item  If $X_i = 1$ then $e$ denotes the next sample element (in the original arrival order), and otherwise $e$ denotes the next real element. Let $C' := C  \Delta \{e\} = (C\setminus\{e\}) \cup (\{e\}\setminus C)$.
    
    \item If $ \OPT(C')\cap \final = \OPT(C )\cap \final $   \Comment{Both have same  finalized elements}
        \begin{enumerate}
            \item If $e\in \OPT(C')$ then $\ALG \gets \ALG \cup \{e\}$ \Comment{accept $e$}   
            \item $C \gets C'$   \Comment{Update $C$, even if $e$ is not accepted}  
        \end{enumerate}
    \item $\final \gets \final\cup \{e\}$.    \Comment{$\final$ contains all virtually processed/finalized elements}
\end{enumerate}
    \item Return $\ALG$.
\end{itemize}
\end{algorithm}

\section{Looking Back}
\label{sec:lookingBack}

Most existing algorithms for matroid secretary exploit Greedy
monotonicity (\Cref{lem:greedy}), i.e., an element
$e^\star\in\OPT(V)$ of the global optimum is also part of the
optimal solution on any subset containing $e^\star$. This suggests
a na\"ive algorithm that discards the first roughly $n/2$
elements to use as a sample $S$ and then accepts an element $e$ if
and only if $e\in\OPT(S\cup\{e\})$. (In GoG, the analogous rule
uses the given sample and tests whether
$e\in\OPT(S^{e\leftarrow R_e})$.) In the secretary setting, this
algorithm has good total value since each $e^\star\notin S$ will
be accepted, but it may not return a feasible independent set.
For example, consider the graphic matroid in
\Cref{fig:looking-back}(a): the sample consists of edges
$(u,a),(u,b),(u,c)$ of weight $1$, and the real elements are
$(a,b),(b,c),(c,a)$ of weight $2$. The na\"ive algorithm accepts
all three real edges, which form a cycle.

\medskip
\noindent\textbf{Greedy algorithms.}
A natural approach to go beyond the na\"ive algorithm maintains
an independent set $I_t$ at all times $t$, initialized to the
optimum solution on the sample, and containing all accepted
elements. It accepts the next real element $e$ if and only if the
optimum solution on $I_t\cup\{e\}$ \emph{after contracting the accepted
elements} contains $e$, and updates $I_t$ to this new solution
(with the contracted elements included).

The authors of \cite{BBSW-WINE21} considered this and related
``greedy'' algorithms. 
More formally, their algorithms maintain an independent set
$I_t$ that spans the previously observed elements and satisfies
two properties: (i)~$I_t$ contains all
accepted elements, and (ii)~$I_t$ contains no real element rejected
after the sampling stage. The choice of $I_t$ is flexible, but the
contracted-greedy acceptance rule above is mandatory. They show
that this class cannot be constant-competitive on graphic
matroids, using hat instances with many claws.

A small hat illustrates the failure mechanism; see
\Cref{fig:looking-back}(b). The sample contains a $u$--$v$ path
with edge weights $4,2$, while the real elements form another
$u$--$v$ path with weights $1,3$, followed by a base edge of
arbitrarily large weight $H$. The greedy algorithm accepts the
weight-$1$ edge since it preserves independence. It then accepts
the weight-$3$ edge: after contracting the first acceptance, the
optimum keeps weights $4,3,1$ and drops the weight-$2$ sample.
The two accepted edges now connect $u$ to $v$, making the heavy
base edge infeasible.

\begin{figure}[htbp]
\centering
\begin{minipage}[t]{0.47\linewidth}
\centering
\begin{tikzpicture}[
    x=1cm,y=1cm,
    vertex/.style={circle,fill=black,inner sep=1.5pt},
    sample/.style={densely dashed,line width=0.8pt},
    real/.style={line width=0.9pt},
    wt/.style={fill=white,inner sep=1.5pt,font=\small},
    every node/.style={font=\small}
]
\coordinate (a) at (0,1.50);
\coordinate (b) at (-1.55,-1.00);
\coordinate (c) at (1.55,-1.00);
\coordinate (u) at (0,-0.12);
\draw[real] (a)--node[wt,pos=0.5] {$2$} (b);
\draw[real] (b)--node[wt,pos=0.5] {$2$} (c);
\draw[real] (c)--node[wt,pos=0.5] {$2$} (a);
\draw[sample] (u)--node[wt,pos=0.56] {$1$} (a);
\draw[sample] (u)--node[wt,pos=0.57] {$1$} (b);
\draw[sample] (u)--node[wt,pos=0.57] {$1$} (c);
\node[vertex,label=above:$a$] at (a) {};
\node[vertex,label=below left:$b$] at (b) {};
\node[vertex,label=below right:$c$] at (c) {};
\node[vertex,label=above left:$u$] at (u) {};
\end{tikzpicture}

\smallskip
{\small (a) The na\"ive rule accepts a cycle.}
\end{minipage}
\hfill
\begin{minipage}[t]{0.49\linewidth}
\centering
\begin{tikzpicture}[
    x=1cm,y=1cm,
    vertex/.style={circle,fill=black,inner sep=1.5pt},
    sample/.style={densely dashed,line width=0.8pt},
    real/.style={line width=0.9pt},
    wt/.style={fill=white,inner sep=1.5pt,font=\small},
    every node/.style={font=\small}
]
\coordinate (u) at (-1.65,-1.00);
\coordinate (v) at (1.65,-1.00);
\coordinate (x) at (0,1.50);
\coordinate (y) at (0,0.20);
\draw[sample] (u)--node[wt,pos=0.5] {$s_1:4$} (x);
\draw[sample] (x)--node[wt,pos=0.5] {$s_2:2$} (v);
\draw[real] (v)--node[wt,pos=0.5] {$e_1:1$} (y);
\draw[real] (y)--node[wt,pos=0.5] {$e_2:3$} (u);
\draw[real,line width=1.2pt] (u)--node[wt] {$h:H$} (v);
\node[vertex,label=below left:$u$] at (u) {};
\node[vertex,label=below right:$v$] at (v) {};
\node[vertex,label=above:$x$] at (x) {};
\node[vertex,label=above:$y$] at (y) {};
\end{tikzpicture}

\smallskip
{\small (b) A hat; real arrival order $e_1,e_2,h$.}
\end{minipage}
\caption{Dashed edges belong to the sample; solid edges are real.
In (a), every real edge individually belongs to the optimum when
added to the sample. In (b), contracting $e_1$ allows greedy to
accept $e_2$, whereas ordinary greedy would drop $e_1$.}
\label{fig:looking-back}
\end{figure}

Our Algorithms~\ref{alg:SSProphSec} and~\ref{alg:matroidSec}
build on two crucial ideas.

\medskip
\noindent\textbf{Idea 1: Preserve finalized greedy decisions (Step 2).}
Rather than making the greedy decision from a single independent
set $I_t$ after contracting the accepted elements, maintain a
configuration $w$ (a current set $C$ in the secretary algorithm)
and only consider $e$ if the proposed update preserves greedy
membership on the finalized elements. In particular, we \emph{do not
contract} the already accepted elements; instead, Step~2 ensures
that the previous/finalized memberships do not change. This
immediately guarantees feasibility, since the accepted elements
always belong to the optimum of the current configuration.

On the hat above, after sampling $s_1,s_2$, consider processing order
$e_1,e_2,h$ in our secretary algorithm (and neither sample has been replayed).
The weight-$1$ edge $e_1$
is accepted, but inserting the weight-$3$ edge $e_2$ would change
the ordinary optimum from weights $4,2,1$ to $4,3,2$, removing
$e_1$. Step~2 therefore rejects this update, and the heavy base
can subsequently be accepted. Thus, unlike contracted greedy,
our algorithm can reject an element even when it improves the
optimum subject to retaining earlier acceptances.

Although this idea avoids the harmful exchange above, if we
update the current configuration only when we accept an element,
the algorithm still fails on simple matroids. For example,
consider a uniform matroid of rank $r$ on $n=4r$ elements with
values in $[1,1+\varepsilon]$, with distinct perturbations to
avoid ties. In the secretary algorithm, this modification keeps
$C=S\cup\ALG$: sample elements are never deleted. Once the
smallest element of $\OPT(C)$ is an accepted element, every
improving insertion would remove it and is forbidden. The
algorithm can then remain permanently stuck with many unused slots.
It accepts only $O(\sqrt r)$ elements in expectation, rather than
$\Omega(r)$.\footnote{Roughly, ignoring the extra restrictions from finalized
samples, after $\sqrt r$ acceptances the smallest accepted
element has only $O(\sqrt r)$ elements below it in the current
greedy solution, on average. Every further acceptance must
replace a sample element below this accepted element, and there
are only $O(\sqrt r)$ such elements left.}
Since all values are nearly $1$, this also gives a vanishing
fraction of optimum value.

\medskip
\noindent\textbf{Idea 2: Update reversibly (Step 2b).}
To overcome this rigidity, the crucial next idea is to update
the current configuration whenever the element passes Step~2,
even when it is not accepted. In the secretary algorithm, this
allows consistent deletions of sample elements. 
Such a deletion
can bring a smaller, not-yet-finalized sample element
 into the optimum, allowing a
later real arrival to replace that sample without removing an
earlier acceptance. This also explains why we retain the sample
configuration rather than only its current optimum: sampled
elements outside that optimum can become relevant after later
updates.

These updates form the heart of the snapshot lemma
(\Cref{lem:unique}): Step~2 compares the same two optima in
either direction, so every permitted update is reversible.
For a fixed processing order, the updates simply permute the
equally likely initial configurations without biasing them.
The configuration faced by an optimum element therefore
remains independent of the processing order, giving it a fair
chance of preceding its possible exchange partner. This is
what turns one-element exchange into the $1/4$ guarantee.

\section{Intersection of Matroids}\label{sec:intersection}

Our algorithm also extends to the intersection of $k$ matroid
constraints. Fix matroids $\calM_j=(V,\calI_j)$ for $j\in[k]$
and nonnegative element values $v_e$. Let $I^\star$ be a
maximum-value common independent set. Write $\OPT_j(w)$ for
greedy in $\calM_j$, and let
$B(w)=\bigcap_{j=1}^k\OPT_j(w)$.
Although $B(w)$ is independent in every matroid, its value need
not approximate that of the optimum common independent set.
We first apply the overlapping-optima preprocessing of
\cite{FSZ-SICOMP22}, losing an
$O(k^2)$ factor. An extension of
\Cref{alg:matroidSec} then loses  another $O(k)$ factor.

\begin{theorem}\label{thm:intersection}
There is an $8k^2(k+1)$-competitive algorithm for secretary under
an intersection of $k$ matroids. It needs only $n=|V|$ and
independence-oracle access to  already-arrived elements.
\end{theorem}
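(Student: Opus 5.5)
The plan has two stages. First I reduce to instances whose per-matroid greedy optima overlap, using the preprocessing of \cite{FSZ-SICOMP22}. Then I run a $k$-matroid version of \Cref{alg:matroidSec}, in which every greedy test is done in all $k$ matroids at once, and analyze it through the RO $2$-sided GoG framework of \Cref{sec:GoG}.

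\textbf{Stage 1: preprocessing.} I invoke the overlapping-optima framework of \cite{FSZ-SICOMP22} as a black box. With a constant-probability sampling stage, it produces (online, using only arrived elements) a random set of candidate elements and possibly modified matroids or values with the following property: the expected value of $B(v)=\bigcap_j \OPT_j(v)$, computed on the surviving elements, is at least $\Omega(1/k^2)$ times $v(I^\star)$. I will need to check that their reduction runs with independence queries restricted to arrived elements. If it instead loses exactly $2k^2$, say, the constants should come out to $8k^2(k+1)$ once Stage 2 loses $4(k+1)$.

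\textbf{Stage 2: the extended algorithm.} I run \Cref{alg:SSProphSec} with its Step~\ref{alg:step2} test imposed in every matroid simultaneously: $\OPT_j(\config^{e\leftarrow R_e})\cap\final=\OPT_j(\config)\cap\final$ for all $j$. Element $e$ is accepted iff $e\in B(\config^{e\leftarrow R_e})$. The feasibility invariant $\ALG\subseteq B(\config)$ holds for the same reason as in \Cref{thm:ROGoG}.

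The test remains symmetric under swapping $S_e$ and $R_e$, so the snapshot lemma (\Cref{lem:unique}) and \Cref{lem:unifPerm} carry over verbatim. Hence the snapshot $\config^\star$ is independent of the arrival order.

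Now fix $e^\star\in B(u)$ and condition on $R_{e^\star}=u_{e^\star}$. By \Cref{lem:greedy}, applied in each matroid, $e^\star$ lies in each $\OPT_j((\config^\star)^{e^\star\leftarrow u})$. By \Cref{lem:heart}, in each matroid $j$ there is at most one exchange partner $f_j$. So $e^\star$ passes the test iff it arrives before all of $f_1,\dots,f_k$. Since these are at most $k$ fixed elements given $\config^\star=w$, and the order is uniform and independent of $w$, this event has probability at least $1/(k+1)$. The GoG guarantee therefore becomes $1/(2(k+1))$. The secretary reduction of \Cref{thm:matroid-secretary}, with $\ell=0$, contributes the other factor $1/2$, giving $1/(4(k+1))$ per element of $B(v)$.

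\textbf{Main obstacle.} The hard part is composing the two stages. The FSZ preprocessing must itself be implementable online with arrived-element oracles, and the elements it discards must be fed into the virtual sample/order construction without breaking \Cref{lem:indepPermSet}. I would try to have the preprocessing consume a separate independent random prefix, so that it acts as a restriction of the ground set. Stage 2 would then be run on the remaining elements, with $n$ replaced by their (known) count. Multiplying the losses $2k^2$ and $4(k+1)$ yields $8k^2(k+1)$.
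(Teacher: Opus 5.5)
Your two-stage structure matches the paper's proof. A prefix $P$ is consumed by the FSZ preprocessing, and \Cref{alg:matroidSec} is then run with fresh randomness on the remaining $m$ elements, with the Step~2 test imposed in all $k$ matroids. The involution argument keeps the snapshot independent of the order, and the at most $k$ displaced elements give $1/(|D|+1)\ge 1/(k+1)$.

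The gap is in the constant, where two errors cancel.
\begin{itemize}
\item The reduction of \Cref{thm:matroid-secretary} adds no loss beyond the GoG bound. It shows $\Pr[e^\star\in A]=\Pr[e^\star\in\ALG\text{ and }R_{e^\star}=v_{e^\star}]$, and the factor $1/2$ from $R_{e^\star}=u_{e^\star}$ is already inside the GoG guarantee. For one matroid, the $1/4$ of \Cref{thm:ROGoG} is $\frac12\cdot\frac12$ and passes through unchanged. So Stage~2 gives $\Pr[e\in\ALG\mid P]\ge 1/(2(k+1))$ for every $e\in B(a)$, a loss of $2(k+1)$, not $4(k+1)$.
\item You chose a preprocessing loss of $2k^2$ to hit the target, but the overlapping-optima theorem of \cite{FSZ-SICOMP22} gives $4k^2$. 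Take $p=1-1/(2k)$ and let $P$ be the first $L\sim\operatorname{Bin}(n,p)$ arrivals. Set $a_e=v_e$ if $e\notin P$ and $e\in H(P\cup\{e\})$, where $H$ is greedy under all $k$ constraints, and $a_e=0$ otherwise. Then $\E_P[a(B(a))]\ge v(I^\star)/(4k^2)$.
\end{itemize}
With this true constant, your accounting proves only $16k^2(k+1)$; the correct product is $4k^2\cdot 2(k+1)=8k^2(k+1)$.

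This explicit form of $a$ also resolves the obstacle you flagged:
\begin{itemize}
\item $a_e$ depends only on $P\cup\{e\}$, so it is computable on arrival using queries on arrived elements.
\item $B(a)\subseteq V\setminus P$, and $a\le v$ gives $\E[v(\ALG)]\ge\E[a(\ALG)]$.
\item Conditional on $P$, the remaining order is uniform, so Stage~2 runs on the known count $m$.
\end{itemize}

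One smaller point: many $a_e$ vanish. The GoG framing needs $0=\ell_e<u_e=a_e$, and \Cref{lem:indepPermSet} uses positive values to read $\mathcal K$ off $S$, so neither applies verbatim. There are two fixes. You can shift to $a+\mathbf 1$, which keeps $\OPT_j(a)\subseteq\OPT_j(a+\mathbf 1)$ for each $j$ and hence $B(a)\subseteq B(a+\mathbf 1)$. Alternatively, as the paper does, keep zero-valued elements in the stream but out of the greedy scans, and run the snapshot argument on reference sets $C\subseteq U$ of element identities.
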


\begin{proof} We begin with the preprocessing of \cite{FSZ-SICOMP22}.

\medskip\noindent \textbf{Preprocessing.}
Let $p=1-1/(2k)$, draw $L\sim\operatorname{Bin}(n,p)$, and
reject the first $L$ elements, forming a sample $P$. Thus $P$
contains each element independently with probability $p$.
Let $H(T)$ be decreasing-value greedy on $T$ subject to all $k$
constraints, using the same fixed tie-breaking order as the
individual greedy algorithms. Define auxiliary values
\[
a_e=
\begin{cases}
v_e,& e\notin P\text{ and }e\in H(P\cup\{e\}),\\
0,&\text{otherwise}.
\end{cases}
\]
The overlapping-optima theorem of
\cite[Theorem 3.3]{FSZ-SICOMP22} gives
\begin{equation}\label{eq:intersection-overlap-preprocess}
\E_P\big[a(B(a))\big]\geq \frac{v(I^\star)}{4k^2}.
\end{equation}
Here $a\leq v$, and $B(a)\subseteq V\setminus P$ because greedy
omits zero-valued elements. Importantly, $a_e$ is computable
when $e$ arrives: the test defining it uses only $P\cup\{e\}$.

\medskip
\noindent\textbf{Algorithm on the remaining elements.}
Let $U=V\setminus P$ and $m=|U|$.  Run \Cref{alg:matroidSec} on the remaining $m$
elements with auxiliary values $a$, fresh randomness, and the
following two changes. Require the test in Step~2 to hold in
every matroid:
\[
\OPT_j(C')\cap F=\OPT_j(C)\cap F\qquad\text{for all }j\in[k].
\]
In Step~2a, accept $e$ exactly when it belongs to every proposed
greedy set. All other steps, including the update in Step~2b and
finalizing every processed element, are unchanged. In this
proof, $\OPT_j(T)$ for $T\subseteq U$ means greedy on $T$ using
$a$. We retain zero-valued elements in the stream and in the
reference-set bookkeeping, but omit them from all greedy scans;
thus they are never accepted. In particular, the input size is
the known number $m$, not the unknown number of positive-valued
elements.

\medskip
\noindent\textbf{Feasibility and the snapshot property.}
As before, the invariant
$\ALG\subseteq\bigcap_{j=1}^k\OPT_j(C)$ proves feasibility:
every permitted update preserves the membership of previous
acceptances, and a new acceptance belongs to every proposed
optimum.

Condition on $P$. The auxiliary vector $a$ is now fixed, and
the remaining physical order is uniform. Let $S$ be the sample
of the new run, and $\Pi$ its virtual processing order.
The interleaving bijection from \Cref{lem:indepPermSet} depends
only on element identities, so for every $T\subseteq U$ and
every permutation $\pi$ of $U$,
\[
\Pr[S=T,\Pi=\pi\mid P]=\frac{1}{2^m m!}.
\]
For fixed $\pi$, the finalized set at each step is a fixed
prefix. Interchanging $C$ and $C\triangle\{e\}$ reverses the
two sides of every equality in Step~2, so its conjunction has
the same outcome in both states. Each reference-set update is
therefore an involution on $2^U$. Consequently, for any fixed
$e\in U$, the reference set $C^e$ immediately before processing
$e$ satisfies
\begin{equation}\label{eq:intersection-snapshot}
\Pr[C^e=T\mid P,\Pi=\pi]\quad =\quad 2^{-m}.
\end{equation}
In particular, conditional on $P$, this snapshot is uniform and
independent of $\Pi$. This argument also applies when some
auxiliary values are zero, because it acts on sets of element
identities rather than on distinct weight configurations.

\medskip
\noindent\textbf{Acceptance probability.}
Fix $e\in B(a)$ and condition further on $C^e=T$ with
$e\notin T$. Only processing $e$ can change its membership in
$C$, so $e\notin T$ means that $e\notin S$ and this is its real
arrival. By \Cref{lem:greedy},
$e\in\OPT_j(T\cup\{e\})$ for every $j$. By
\Cref{lem:heart}, inserting $e$ displaces at most one element
from each greedy set. Let $D$ be the set of distinct displaced
elements; it is determined by $P,T,e$ and satisfies $|D|\leq k$.
The consistency test (Step 2) passes exactly when no element of $D$ has
yet been finalized. By \eqref{eq:intersection-snapshot}, the
conditional order remains uniform, so
\[
\Pr[e\in\ALG\mid P,C^e=T]
~=~\Pr[e\text{ precedes all elements of }D\mid P,C^e=T]
~=~\frac{1}{|D|+1}~\geq~\frac{1}{k+1}.
\]
Since $\Pr[e\notin C^e\mid P]=1/2$, we obtain
$\Pr[e\in\ALG\mid P]\geq1/(2(k+1))$ for every $e\in B(a)$.
Linearity of expectation and
\eqref{eq:intersection-overlap-preprocess} now give
\[
\E[v(\ALG)]~\geq~ \E[a(\ALG)]
~\geq~\frac{\E_P[a(B(a))]}{2(k+1)}
~\geq~\frac{v(I^\star)}{8k^2(k+1)}. \qedhere
\]
\end{proof}

\subsection*{AI Usage}
\Cref{alg:SSProphSec} was found during an extended exploration
with several models available through ChatGPT Pro.
Along the way, we considered several approaches to matroid
secretary, including single-sample
algorithms~\cite{Li-24,DKLRS-FOCS24}, pairing sample and real
elements~\cite{RS-STOC26}, the minimax
principle~\cite{KMPS-FOCS26}, and a reduction to free-order
matroid secretary via tangential
sequences~\cite{FSW-arXiv26}.
The key algorithmic idea emerged recently in this exploration after updating the model to ChatGPT Astra.
We also used ChatGPT Astra and Claude Opus 5 for assistance
with the presentation.
We have checked the mathematical arguments and take
full responsibility for all the content.

\subsection*{Acknowledgments}
We are grateful to Anupam Gupta and S. Matthew Weinberg for discussions on the matroid secretary problem over the
past decade.
We also thank Thomas Kesselheim, Robert Kleinberg, Wenzheng Li, Thiago Oliveira, Kalen Patton, 
 Rebecca Reiffenhäuser, Aviad Rubinstein,  Siddharth M. Sundaram, and Ola Svensson for related discussions.
We thank Anupam Gupta,  S. Matthew Weinberg, and Robert Kleinberg for comments on early versions of this paper.

{\small
\bibliographystyle{alpha}
\bibliography{bib.bib}
}

\end{document}